\newtheorem{theorem}{Theorem}
\newtheorem{proposition}{Proposition}
\theoremstyle{remark}
\newtheorem{remark}{Remark}
\definecolor{pyplotblue}{RGB}{3,67,223}
\definecolor{pyplotred}{RGB}{229,0,0}
\newcommand{\AirSim}{AirSim\xspace}
\newcommand{\LandscapeMountains}{\textsf{LandscapeMountains}\xspace}
\newcommand{\AirSimNH}{\textsf{AirSimNH}\xspace}
\newcommand{\ZhangJiajie}{\textsf{ZhangJiajie}\xspace}
\newcommand{\PercCont}{perception contract\xspace}
\newcommand{\Real}{\ensuremath{\mathbb{R}}\xspace}
\newcommand{\Fcam}{\ensuremath{\mathit{s}}\xspace}
\newcommand{\Fnn}{\ensuremath{\mathit{h}}\xspace}
\newcommand{\round}[1]{\ensuremath{\left\lfloor #1 \right\rceil}\xspace}
\newcommand{\pos}[1]{\ensuremath{{q}_{#1}}\xspace}
\newcommand{\img}[1]{\ensuremath{{m}_{#1}}\xspace}
\newcommand{\tgt}[1]{\ensuremath{{q}_{#1}^*}\xspace}
\newcommand{\stateq}{\ensuremath{\mathrm{q}}\xspace}
\newcommand{\targetq}{\ensuremath{\mathrm{q}^*}\xspace}
\newcommand{\statey}{\ensuremath{\mathrm{y}}\xspace}
\newcommand{\targety}{\ensuremath{\mathrm{y}^*}\xspace}
\newcommand{\statez}{\ensuremath{\mathrm{z}}\xspace}
\newcommand{\Eqstar}{\ensuremath{\mathrm{E}_{\targetq}}\xspace}
\newcommand{\Safe}{\ensuremath{\mathrm{S}}\xspace}
\newcommand{\Nodes}{\ensuremath{\{1\dotsc N\}}\xspace}
\newcommand{\CamM}{\ensuremath{K}\xspace}
\newcommand{\RotM}{\ensuremath{R}\xspace}
\newcommand{\tvec}{\ensuremath{{t}}\xspace}
\newcommand{\chiaocolor}{\color{black}}
\newenvironment{chiaoenv}{\chiaocolor}{}
\newcommand{\chiao}[1]{{\chiaocolor #1}}
\newcommand{\sayan}[1]{\textcolor{black}{#1}}
\title{
Assuring Safety of Vision-Based Swarm Formation Control
}
\author{%
Chiao Hsieh\textsuperscript{1},
Yubin Koh\textsuperscript{1},
Yangge Li\textsuperscript{1}
and Sayan Mitra\textsuperscript{1}
\thanks{$^{1}$The authors are with Coordinated Science Laboratory, University of Illinois Urbana-Champaign,
        Champaign, IL, USA
        \texttt{\{chsieh16, yubink2, li213, mitras\}@illinois.edu}}%
}
\begin{document}

\maketitle
\thispagestyle{empty}
\pagestyle{empty}

\begin{abstract}

Vision-based formation control systems are attractive because they can use inexpensive sensors and can work in GPS-denied environments. The safety assurance for such systems is challenging: the vision component's accuracy depends on the environment in complicated ways, these errors propagate through the system and lead to incorrect control action, and there exists no formal specification for end-to-end reasoning.
We address this problem and propose a technique for safety assurance of vision-based formation control:
First, we propose a  scheme for constructing quantizers that are consistent with vision-based perception.  Next, we show how the convergence analysis of a standard quantized consensus algorithm can be adapted for the  constructed quantizers. 
We use the recently defined notion of \emph{perception contracts} to create error bounds on the actual vision-based perception pipeline using sampled data from different ground truth states, environments, and weather conditions. 
Specifically, we use a quantizer in logarithmic polar coordinates, and we show that this quantizer is sutiable for the constructed perception contracts for the vision-based position estimation, where the error worsens with respect to the absolute distance between agents. We build our formation control algorithm with this nonuniform quantizer, and we prove its convergence employing an existing result for quantized consensus.

\end{abstract}

\section{Introduction}

Distributed consensus, flocking, and formation control have been studied extensively, including in scenarios where the  participating agents only have partial state information (see, for example~\cite{Blondel,Saber.Murray2003Flockingwithobstacle,mesbahi2010graph,bullo2009distributed}). 
With the advent of deep learning and powerful computer vision algorithms, it is now feasible for agents to use vision-based state estimation for formation control~(See Figure~\ref{fig:airsim-drones-cameras}). Such systems can be attractive because they do not require expensive sensors and localization systems, and also can be used in GPS-denied environments~\cite{montijano2016vision,fathian2018distributed,fallah2022visual}. However, deep learning and vision algorithms are well-known to  be fragile, which can break the correctness and safety of the end-to-end formation control system. Further, it is  difficult to specify the correctness of a vision-based state estimator, which gets in the way of modular design and testing of the overall  formation control system~\cite{emsoft2022_industry}. In this paper, we address these challenges and present the first end-to-end   formal analysis of a vision-based formation control system.


\begin{figure}[ht]
    \centering
    \includegraphics[height=4cm]{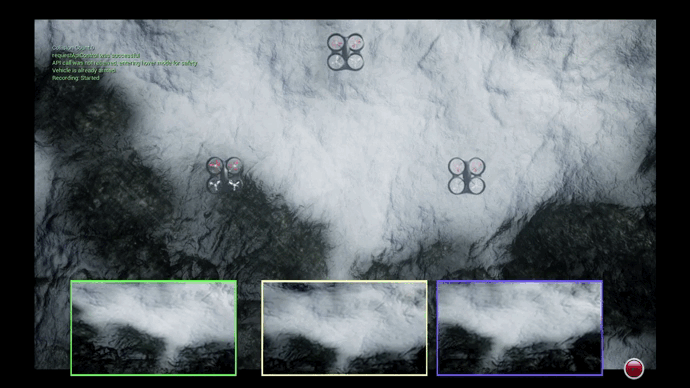}
    \caption{Vision-based drone formation using downward facing camera images in AirSim.}
    \label{fig:airsim-drones-cameras}
\end{figure}

We present analyses for both convergence and safety assurance of a vision-based swarm formation control system.
The computer vision pipeline (See Figure~\ref{fig:architecture}) uses feature detection, feature matching,  and geometry to estimate the relative position of the participating drones.
The estimated relative poses are then used by a consensus-based formation control algorithm.
There are two key challenges in analyzing the system:
\begin{inparaenum}[(1)]
\item The perception errors impact the behavior of neighboring agents and propagates through the entire swarm.
\item The magnitude of the perception error is highly nonuniform, and depends on the ground truth values of the relative position between neighboring agents.
\end{inparaenum}
In general, perception errors can get worse as the system approaches the equilibrium (desired formation), and thus, make stabilization difficult.
Environmental variations (e.g., lighting, fog) are other factors that can make the vision-based system unstable.

\sayan{
In addressing the problem, our idea is to view the vision-based formation control system as a \emph{quantized consensus protocol}~\cite{kashyap2007quantized}.
We start with the \emph{assumption} that the impact of the state estimation errors arising from the vision pipeline can be encapsulated as quantization errors in a non-uniform quantization scheme.
\chiao{That is, the quantization step size can vary non-uniformly with respect to the state so that the quantization errors can overapproximate state dependent perception errors.}
To discharge this assumption, our analysis has to meet two requirements.
First, we have to propose a \emph{specific} quantization scheme under which the formation control system is indeed guaranteed convergence. For this, we develop a quantized formation controller~(Equation~\eqref{eq:controller_yi}) and a \emph{logarithmic polar quantizer}~(Equation~\eqref{def:quantized-radius} and \eqref{def:quantized-angle}),
and we show in Theorem~\ref{thm:y-eq} that indeed the resulting quantized formation control protocol converges, using sufficient conditions from~\cite{kashyap2007quantized}.
Secondly, we have to show that a quantizer instantiated from the quantization scheme matches the error characteristics of the vision pipeline. For this part, we utilize the recently developed idea of \emph{\PercCont{}s}~\cite{hsieh2022aap,astorga2023perception}. A perception contract~(PC) for a vision-based state estimator bounds the estimation error as a function of the ground truth state. Earlier in~\cite{astorga2023perception}, PCs have been used to establish the safety of vision-based lane keeping systems. For formation control, however, the PCs are dramatically different because the error has a highly non-uniform dependency on the state; as the drones get closer, the error drops.
Through data-driven construction of the logarithmic PC,
we show that the vision pipeline indeed matches the PC with high probability in Section~\ref{sec:perc-error},
and we further adapt to environmental variations by inferring quantization step sizes for different environments.}

In summary, our contributions are as follows:
\begin{inparaenum}[(1)]
\item An approach to construct a quantizer as the \PercCont of the vision component.
\item \sayan{Empirical} analysis of the impact of environmental variations on the \PercCont with the photorealistic \AirSim simulator~\cite{shah2018airsim}.
\item \sayan{Theoretical analysis of the overall formation control system using the constructed quantizer, which gives the bounds on the convergence time.
}
\end{inparaenum}
Our code of the vision pipeline, simulation script, and analysis tool are publicly available\footnote{%
Repository: \url{https://gitlab.engr.illinois.edu/aap/airsim-vision-formation}}.

\begin{figure}[t]
    \centering
    \includegraphics[width=\columnwidth]{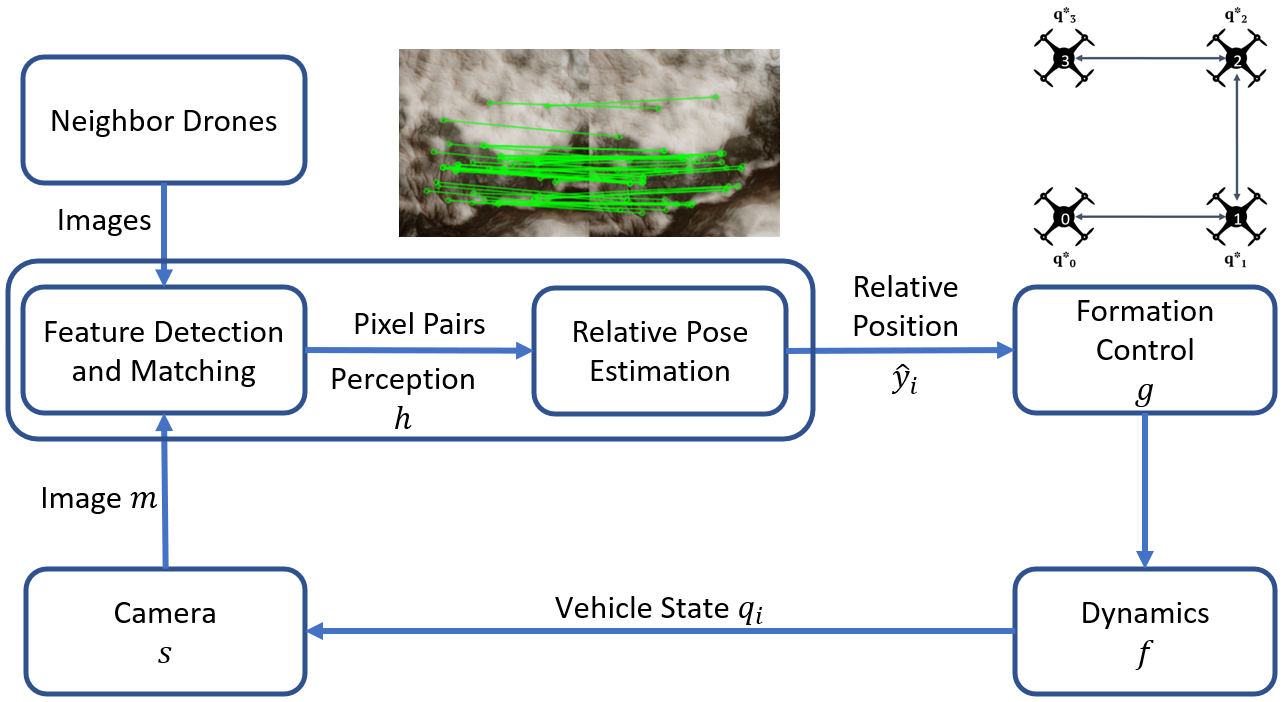}
    \caption{Architecture of an agent in the vision-based formation control systems.}
    \label{fig:architecture}
\end{figure}

\paragraph*{Related Works}
Two parallel threads of research have recently addressed formal end-to-end analysis of vision-based autonomous systems. The works in~\cite{Pasareanu-CAV23,Pasareanu22-controllerRV,Pasareanu2023assumption} approach the problem using discrete state space and stochastic models. Our previous work~\cite{hsieh2022aap,astorga2023perception} develops the idea of {\em perception contracts} using the language of continuous state space models. Thus far all the applications studies in these threads are related to lane following by a single agent, which is quite different from distributed formation control.

VerifAI~\cite{verifAI} uses techniques like fuzz testing and simulation to falsify the system specifications.
Katz et al.~\cite{katz2021verification} trains generative adversarial networks (GANs) to produce a network to simplify the image-based NN.
NNLander-VeriF~\cite{cruz2022nnlander} verifies NN perception along with NN controllers for an autonomous landing system.
In contrast, 
our current work aims to provide safety analyses for a formation control system with vision-based perception,
and we apply the analysis on convergence to quantized consensus~\cite{kashyap2007quantized} for safe separation and formation.

\paragraph*{Paper Organization}
In Section~\ref{sec:prelim}, we introduce the formation control system with the vision-based perception and review the quantized formation controller.
In Section~\ref{sec:ult-bound}, we show the convergence under perception error using our main theory of quantized consensus.
In Section~\ref{sec:perc-error}, we describe the quantization for perception error bounds via sampling from vision-based pose estimation with AirSim simulation.
We then conclude in Section~\ref{sec:conclusion}.

\section{Vision-Based Formation Control}\label{sec:prelim}

We will study a distributed formation control system
with $N+1$ identical aerial vehicles or \emph{agents} with a leader agent 0 as shown in Figure~\ref{fig:airsim-drones-cameras}.
The target formation is specified in terms of relative positions between agents.
Each agent $i$ has a downward facing camera, and it uses images from its own camera and its predecessor $i-1$'s camera to periodically estimate the relative position of $i$ with respect to $i-1$.
Based on the estimated relative positions to its neighbor, agent $i$ then updates it own position by setting a velocity, to try and achieve the target formation.

Before describing the vision and control modules in more detail, we introduce some notations.
First, we describe the neighborhood relation between agents by an undirected connected graph $G = (V, E)$, where $V = \{0,1,\dotsc,N\}$.
Second, we only consider planar formations for simplicity though the agents are in 3-dimensional space.
Thus, the position of agent $i$ in the world frame is represented by a vector $\pos{i} \in \Real^2$.
The state of the overall system is a sequence $\stateq = (\pos{0}, \pos{1}, \dotsc, \pos{N})$.
The distributed formation control system evolves with a goal of reaching a target formation in a set $\Eqstar$ that is specified by a vector $\targetq = (\pos{0}^*, \pos{1}^*, \dotsc, \pos{N}^*)$ as
\[
\Eqstar = \{\stateq \mid \forall i \in \{1, \dotsc, N \}, \pos{i} - \pos{i-1} = \tgt{i} - \tgt{i-1}\}.
\]
That is, \Eqstar is the set of all states that form \targetq up to  translations.
We also specify a {\em safe set\/}  that the where the distance between no two agents  is too close or too far:
\[
\Safe_\stateq = \{\stateq \mid \forall i \in \{1, \dotsc, N \}, d_{\min} < \lVert\pos{i} - \pos{i-1}\rVert < d_{\max}\}
\]
where $0 < d_{\min} < d_{\max}$ defines the range of safe distances.

\subsection{Vision-Based Relative Pose Estimation}
\label{subsec:prelim-vision}

\begin{figure}[t!]
    \includegraphics[width=\columnwidth, height=25mm, keepaspectratio]{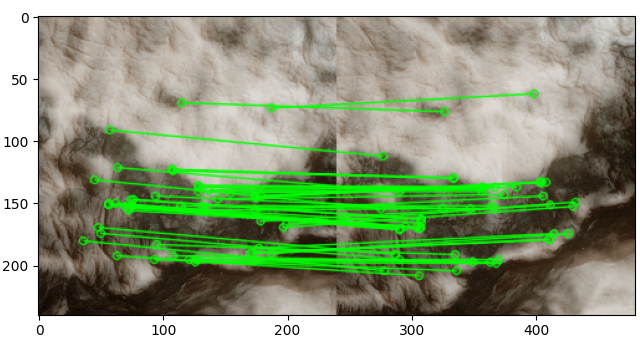}
    \centering
    \caption{Feature matching on an image pair from \AirSim.}
    \label{fig:feature-matching}
\end{figure}

We now discuss the components of an agent $i$ (Figure~\ref{fig:architecture}).
Agent $i$'s downward-facing camera $\Fcam$ periodically generates an image of the ground $\img{i}$, which depends on its state $\pos{i}$ and other environmental factors like background scenery, lighting, fog, etc.
The neighboring agent $i-1$ generates another image $\img{i-1}$ of the ground and shares this with agent $i$ over the communication channel. We assume the whole system runs synchronously in lock-step, i.e., both neighboring drones will capture the image at the same time and there's no communication delay between drones while sharing images.
The vision-based pose estimation algorithm $\Fnn$ takes a pair of images, $\img{i-1}$ and $\img{i}$, as an input and produces the \emph{estimated relative position} $\hat{y}_{i}$ to estimate the relative position of agent $i$ with respect to agent $i-1$, i.e., $\pos{i} - \pos{i-1}$.
The estimation algorithm in general follows these steps:
(1) First, $\Fnn$ detects features from each image. Any of the various feature detection algorithms like SIFT~\cite{lowe04sift}, SURF~\cite{bay2006surf}, and ORB~\cite{2011-rublee} can be used for this step.
(2) Then, $\Fnn$ collects the detected features from the pair of images, and a feature matching algorithm (such as FLANN~\cite{muja09flann}) is used to match  pairs of features in each image as shown in Figure \ref{fig:feature-matching}.
(3) For each feature point, the relationship between the pixel coordinate and the world coordinate of the feature point~\cite{ma2003invitation} is described by 
\(
    s[u\ v\ 1]^T = \CamM[\RotM\mid \tvec][X\ Y\ Z\ 1]^T
\)
where $[u\ v\ 1]^T$ is the pixel coordinate, $[X\ Y\ Z\ 1]^T$ is the world coordinate of each detected feature, \CamM is the camera intrinsic matrix and $[\RotM\mid \tvec]$ is the extrinsic camera parameters.
With a set of at least 8 matched features, we can come up with 8 pairs of equations between the poses of two cameras, and by solving these equations, we can calculate the relative rotation and the normalized translation vector using the inverse geometry of image formation.
Examples of this step appear in~\cite{Malis2007DeeperUO,nister2004relativepose, li2006fivepoint}.
Further, the altitude information and drone orientation can be used to estimate the true distance to ground and recover the length of the translation vector.

The accuracy of the perception pipeline can be influenced by many factors.
The change of environments such as background, lighting, and weather influence the quality of the image and the image features, which in turn influence the accuracy of relative pose estimation.

\subsection{Formation Control in Relative Positions}
\label{ssec:formation}

To simplify the notations, let $Y$ denote the vector space of the \emph{relative positions} between pairs of drones. Let $y_i \in Y$ be defined as $y_i = \pos{i} - \pos{i-1}$, for $i = 1, \dotsc, N$.
A state is a sequence of relative positions $\statey = (y_1, y_2,\dotsc,y_N)$.
Let $y_i^* \in Y$ be the desired relative vector between drone $i-1$ and $i$,
i.e., $y_i^* = \pos{i}^* - \pos{i-1}^*$.
The target equilibrium state is defined by:
\[
\targety = (y_1^*,\dotsc,y_N^*),
\]
and the safe set is
\begin{equation*}
\Safe_\statey := \{\statey \mid \forall i \in \{1, \dotsc, N\}, d_{\min} < \lVert y_i \rVert < d_{\max}\}.    
\end{equation*}
The following proposition relating the \statey-system with the original \stateq-system follows immediately. 
\begin{proposition}
The formation control system (\stateq-system) reaches a desired state in \Eqstar if and only if the system in relative positions (\statey-system) reaches the target \targety.
In addition, the formation control system (\stateq-system) stays within the safe set $\Safe_\stateq$ if and only if the system in relative positions (\statey-system) stays within the safe set $\Safe_\statey$.
\end{proposition}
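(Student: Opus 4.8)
The plan is to make explicit the linear change of coordinates that connects the two systems and to verify that it carries each characterizing set onto the other. I would begin by defining the map $\Phi$ that sends a configuration $\stateq = (\pos{0}, \dotsc, \pos{N})$ to its tuple of relative positions $\statey = (y_1, \dotsc, y_N)$ through $y_i = \pos{i} - \pos{i-1}$. This $\Phi$ is linear and surjective, and its kernel consists exactly of the global translations $\pos{0} = \pos{1} = \dotsb = \pos{N}$; equivalently, two configurations share the same image under $\Phi$ precisely when they differ by a translation. This is the structural fact that drives the whole equivalence, because both $\Eqstar$ and $\Safe_\stateq$ are translation-invariant by construction, i.e.\ they are unions of fibers of $\Phi$.

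Next I would record the two purely algebraic set identities. Since $y_i^* = \tgt{i} - \tgt{i-1}$ by definition, a configuration $\stateq$ lies in $\Eqstar$ exactly when $\pos{i} - \pos{i-1} = \tgt{i} - \tgt{i-1}$ for every $i$, which is the same as $y_i = y_i^*$ for all $i$, that is $\Phi(\stateq) = \targety$; hence $\Eqstar = \Phi^{-1}(\{\targety\})$. Substituting $y_i = \pos{i} - \pos{i-1}$ into the distance constraints $d_{\min} < \lVert \pos{i} - \pos{i-1} \rVert < d_{\max}$ gives $\stateq \in \Safe_\stateq$ if and only if $\Phi(\stateq) \in \Safe_\statey$, so $\Safe_\stateq = \Phi^{-1}(\Safe_\statey)$. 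Both identities follow directly from the definitions. To lift these pointwise statements to whole trajectories, I would observe that the $\statey$-system is defined from the $\stateq$-system through precisely this change of variables, so $\Phi$ intertwines the two dynamics: every $\stateq$-trajectory projects under $\Phi$ to a $\statey$-trajectory, and every $\statey$-trajectory lifts to one. Combining the intertwining with the set identities yields the claim: $\stateq(t)$ reaches $\Eqstar$ if and only if $\statey(t) = \Phi(\stateq(t))$ reaches $\targety$, and $\stateq(t)$ stays in $\Safe_\stateq$ for all $t$ if and only if $\statey(t)$ stays in $\Safe_\statey$ for all $t$.

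The only point that needs care, and the main obstacle, is the non-injectivity of $\Phi$: in the lifting direction a single $\statey$-trajectory corresponds to an entire translation-orbit of $\stateq$-trajectories, so the two statements must be understood modulo translation. This is already built into the formulation, since $\Eqstar$ specifies the formation only ``up to translations'' and $\Safe_\stateq$ depends solely on inter-agent differences; both target sets are therefore exactly unions of $\Phi$-fibers, and the equivalence is insensitive to which lift is chosen. Because of this, no genuine analysis is required beyond the change of variables, which is why the proposition follows immediately.
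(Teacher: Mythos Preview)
Your proposal is correct. The paper does not actually prove this proposition; it simply states that it ``follows immediately'' from the definitions, so there is no argument to compare against. Your write-up is a faithful unpacking of that immediacy: you identify the linear map $\Phi(\stateq) = (\pos{1}-\pos{0},\dotsc,\pos{N}-\pos{N-1})$, check that $\Eqstar = \Phi^{-1}(\{\targety\})$ and $\Safe_\stateq = \Phi^{-1}(\Safe_\statey)$ straight from the definitions, and note that the translation-invariance of both sets makes the non-injectivity of $\Phi$ harmless. Nothing further is needed.
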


\subsection{Quantized Formation and Perception Contract}
\label{ssec:quant}
\begin{chiaoenv}
Given the system in relative positions with the state at time $t$ represented by $\statey[t]=(y_1[t],\dotsc,y_N[t])$,
we aim to handle perception errors by designing a \emph{quantizer} $Q$ and a quantized formation controller built on top of $Q$.
The insight is as follows: If the ground-truth $y_i[t]$ and the perceived relative position $\hat{y}_i[t]$ always lead to the same quantized value after quantization, i.e., $Q(y_i[t]) = Q(\hat{y}_i[t])$,
then any quantized formation controller using $Q(\hat{y}_i[t])$ instead of $Q(y_i[t])$ will still stabilize the system regardless of perception errors.

More precisely,
we follow the definitions in~\cite{liberzon2003hybrid} and choose a subset of positions $W = \{w_1, w_2, \dotsc\} \subset Y$ selected for quantization.
A \emph{quantizer} is a function $Q: Y \to W$ which partitions $Y$ into \emph{quantization regions} of the form $\{y \in W \mid Q(y)=w_k\}$ for each $w_k \in W$.
Given a target state $\targety=(y_1^*,\dotsc,y_N^*)$ with all $y_i^* \in W$,
a \emph{quantized formation controller} ensures that the system is evolving between states with all $y_i[t] \in W$ for all time $t$,
and stabilizes the system to the target \targety.
The \emph{perception contract}~(\emph{PC}) then requires that the perceived value is always in the quantization region defined by the ground-truth quantized value, that is, $\hat{y}_i[t] \in \{y \mid Q(y) = y_i[t]\}$ or equivalently $Q(\hat{y}_i[t]) = Q(y_i[t])$.
\end{chiaoenv}

\newcommand{\mean}{\ensuremath{\mathit{mean_\omega}}\xspace}
\newcommand{\wrap}{\ensuremath{\mathit{wrap}}\xspace}
\newcommand{\sub}{\ensuremath{\mathit{diff}}\xspace}

In this paper, we study a particular template of quantized formation controllers constructed using a quantizer, a difference function, and a weighted average function.
We assume a generalized difference function \sub to calculate the difference between two relative positions and a function \mean to compute a weighted midpoint parametrized by a real value $\frac{1}{2} < \omega < \frac{3}{4}$.
The system evolves according to a (nondeterministic) discrete dynamical system
with a pair $(i,j)$ selected randomly at each time step $t$,
and the quantized formation controller updates the pair of states as follows:
\begin{equation}\label{eq:controller_yi}
\begin{small}
\begin{aligned}
    y_i[t+1] &= Q(\mean(Q(y_i[t]), \sub(Q(y_j[t]), \sub(y_j^*, y_i^*)))) \\
    y_j[t+1] &= Q(\mean(Q(y_j[t]), \sub(Q(y_i[t]), \sub(y_i^*, y_j^*)))) \\
\end{aligned}
\end{small}
\end{equation}
Note that, by design, the updated states are always quantized values given any \sub and \mean,
and they are unaffected by perception errors whenever the perception contract holds.

\begin{chiaoenv}
This template allows us to design a quantized formation controller with vision-based perception in two separate steps:
\begin{inparaenum}[(1)]
\item find sufficient conditions for the quantizer that ensures the convergence and safety of the quantized formation controller and
\item construct the quantizer from observed perception errors to serve as the perception contract.
\end{inparaenum}
In Section~\ref{sec:ult-bound},
we derive the sufficient conditions for the safety and convergence of the quantized formation.
We prove that, given a \emph{quantizer} over \emph{logarithmic polar}~(log-polar) coordinates,
the system in Equation~\eqref{eq:controller_yi} safely converges to the target state,
and the expected value for the time of convergence is bounded.
In Section~\ref{sec:perc-error}, we study the empirically observed perception error,
and we explain how to design a quantizer to approximate the perception error.
\end{chiaoenv}
\section{Convergence of Quantized Formation}\label{sec:ult-bound}

\newcommand{\etap}{\ensuremath{\Tilde{\eta}}\xspace}
\newcommand{\Lambdap}{\ensuremath{\Tilde{\Lambda}}\xspace}

In this section, we prove that the true relative positions between agents, with a quantizer on log-polar coordinates modeling the perception error,
converges to the target formation and stays within the safe distance bounds.
This is done by proving that the system in Equation~\eqref{eq:controller_yi} is simulated by a \emph{quantized averaging} algorithm in~\cite{kashyap2007quantized},
which is proven to always converge to quantized consensus and stay within a bounded interval.
In addition, the expected convergence time of the algorithm is bounded.
We first provide the quantizer in Section~\ref{subsec:quantizer}.
We then prove the simulation relation between formation control systems and quantized averaging algorithms in Section~\ref{subsec:translation},
and the bound on the convergence time in Section~\ref{subsec:timebound}.

\subsection{Quantization on Logarithmic Polar Coordinates}\label{subsec:quantizer}
We first define the set of selected positions $W$ in polar coordinates.
Without loss of generality,
we choose a quantization \emph{step radius} $a > 1$ and define the set of quantized radii $R = \{a^0, a^{\pm 1}, a^{\pm 2}, \dotsc\}$ and a \emph{step angle} $\theta_b = \frac{2\pi}{M}$ to define the set of quantized angles $\Theta = \{0, \theta_b, \dotsc, (M-1)\cdot\theta_b\}$ with an integer $M \geq 2$;
then we define the set of selected points $W$ by $W=R\times\Theta$.
Equivalently, given two positions in polar coordinates $y_i= [r_i\,\theta_i]^T$ and $y_j=[r_j \, \theta_j]^T$ with angles normalized to $[0, 2\pi)$,
we define the quantizer and other functions for the radial coordinate as below:
\begin{equation}\label{def:quantized-radius}
\begin{gathered}
I(r_i) = \round{\log_{a}r_i}; \quad
Q(r_i) = a^{I(r_i)}; \\
\sub(r_i, r_j) = \frac{r_i}{r_j}; \quad
\mean(r_i, r_j) =  r_i^{(1-\omega)} \cdot r_j^\omega
\end{gathered}
\end{equation}
where \round{\cdot} rounds the real number to the nearest integer.
The functions for the angular coordinate are as below:
\begin{equation}\label{def:quantized-angle}
\begin{gathered}
I(\theta_i) \equiv_M \round{\frac{\theta_i}{\theta_b}}; \quad
Q(\theta_i) = I(\theta_i)\cdot\theta_b; \\
\sub(\theta_i, \theta_j) = \theta_i \oplus -\theta_j; \\
\mean(\theta_i, \theta_j) = \theta_i \oplus (\omega \cdot (\theta_j \oplus -\theta_i))
\end{gathered}
\end{equation}
where $\equiv_M$ means congruent modulo $M$, $\oplus$ is the addition in the commutative group $SO(2)$, and $-\theta_i$ represents the additive inverse of $\theta_i$.
The \mean function calculates the weighted geometric mean of rotations~\cite{moakher2002means}.

\subsection{Simulation by Quantized Averaging Algorithms}\label{subsec:translation}
Following~\cite{kashyap2007quantized}, let $\statez[t] = (z_1[t], \dotsc, z_N[t])$ denote the state in the integer domain at each time step.
An instance of quantized averaging algorithms evolves as Equation~\eqref{eq:controller-z} below:
\begin{equation}\label{eq:controller-z}
\begin{aligned}
    z_i[t+1] &= z_i[t] + \round{\omega\cdot(-z_i[t] + z_j[t])} \\ 
    z_j[t+1] &= z_j[t] + \round{\omega\cdot(z_i[t] - z_j[t])}
\end{aligned}
\end{equation}
The following proposition is directly from~\cite[Section 5]{kashyap2007quantized}.
\begin{proposition}\label{prop:quantized-average}
The system in Equation~\eqref{eq:controller-z} always converges to the set of equilibria:

\noindent\(
\xi = \{ (z_1,\dotsc, z_N) \mid z_i \in \{L, L+1\}, i\in\Nodes \sum\limits_{i=1}^N z_i = S\}
\)
where $S = \sum_{i=1}^N z_i[0]$ is the initial sum of the system and $L = \lfloor \frac{S}{N} \rfloor$ is the quantized average.   
\end{proposition}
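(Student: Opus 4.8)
The plan is to recognize Equation~\eqref{eq:controller-z} as an instance of the pairwise quantized averaging dynamics of~\cite{kashyap2007quantized}, so that the statement is exactly their Section~5 result; I would nonetheless reconstruct the argument around two ingredients, a conserved quantity and a monotone integer-valued potential. First I would verify that the total $\sum_{i=1}^{N} z_i[t]$ is invariant: adding the two update lines of~\eqref{eq:controller-z} and using that $\round{\cdot}$ is a symmetric (odd) rounding rule gives $\round{\omega(z_j-z_i)}+\round{\omega(z_i-z_j)}=0$, hence $z_i[t+1]+z_j[t+1]=z_i[t]+z_j[t]$ while the untouched coordinates stay fixed. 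Therefore $\sum_i z_i[t]=S$ for all $t$, which already pins down both $S$ and $L=\lfloor S/N\rfloor$ appearing in $\xi$.

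Next I would introduce the potential $\Phi[t]=\sum_{i=1}^{N} z_i[t]^2$ and compute its change under a single update of the selected pair $(i,j)$. Writing $d=z_j-z_i$ and $k=\round{\omega d}$, the two touched coordinates become $z_i+k$ and $z_j-k$, so $\Phi$ changes by $\Delta=2k(k-d)$. The crucial case analysis uses the hypothesis $\tfrac12<\omega<\tfrac34$: for $|d|\ge 2$ one checks $0<|k|<|d|$ (the lower bound $\omega>\tfrac12$ forces $|k|\ge 1$, while $\omega<\tfrac34$ gives $\omega|d|<|d|-\tfrac12$ and hence $|k|\le|d|-1$), so $k$ lies strictly between $0$ and $d$ and $\Delta<0$; for $|d|=1$ one gets $k=\pm1$, which merely \emph{swaps} the two values, leaving $\Phi$ and indeed the entire multiset of coordinates unchanged; and for $d=0$ the step is a no-op. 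Thus $\Phi$ is a nonnegative, integer-valued, non-increasing quantity.

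The main obstacle is the final step, turning this \emph{weak} Lyapunov function (non-strict on swaps) into convergence to $\xi$, since a configuration in which every currently selected edge satisfies $|d|\le 1$ need not lie in $\xi$ (e.g.\ values $0,1,2$ along a path all have adjacent gaps at most $1$). Because $\Phi$ is integer-valued and non-increasing it is eventually constant, after which only swaps and no-ops occur and the multiset of values is frozen; I would then argue this frozen multiset satisfies $\max_i z_i-\min_i z_i\le 1$. Suppose not: on the connected graph $G$ the maximal value $M$ can be transported along edges by swaps, and using that every edge is selected infinitely often (from the random pair selection) I would show $M$ eventually sits adjacent to a coordinate of value at most $M-2$, whose selection gives $|d|\ge2$ and a further strict decrease of $\Phi$, contradicting that $\Phi$ had stopped decreasing. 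Making this transport/reachability step precise rather than merely heuristic is the delicate part, and is exactly what the connectivity and fairness assumptions of~\cite{kashyap2007quantized} supply. Once $\max-\min\le 1$ is established, the conserved sum $S$ forces every coordinate into $\{L,L+1\}$ with $L=\lfloor S/N\rfloor$, i.e.\ the state lies in $\xi$, completing the proof.
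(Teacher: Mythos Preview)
Your proposal is correct and in fact goes beyond what the paper does: the paper offers no proof of its own for Proposition~\ref{prop:quantized-average}, instead stating that it is taken directly from~\cite[Section~5]{kashyap2007quantized}. Your reconstruction---sum invariance via the oddness of $\round{\cdot}$, the integer sum-of-squares potential $\Phi$ with $\Delta=2k(k-d)$, the case split on $|d|$ exploiting $\tfrac12<\omega<\tfrac34$, and the swap/transport argument to exclude frozen multisets of spread at least two---is exactly the standard argument of that reference, so the approaches agree.
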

We now prove the convergence of the formation control system using Proposition~\ref{prop:quantized-average}.

\vspace{0.5em}
\begin{theorem}\label{thm:y-eq}
For any initial state $\statey[0] \in \mathrm{Y}_0$ where
\begin{small}
\[
\mathrm{Y}_0 = 
\left\{ (\begin{bmatrix}
    r_1 \\
    \theta_1
\end{bmatrix},\dotsc,\begin{bmatrix}
    r_N \\
    \theta_N
\end{bmatrix}) \right|\left.\prod\limits_i Q(r_i) = \prod\limits_i r_i^* \land \bigoplus\limits_i Q(\theta_i)=\bigoplus\limits_i\theta_i^*
\right\},
\]
\end{small}%
the formation control system in Equation~\eqref{eq:controller_yi} converges to the target 
$\targety = (y_1^*, \dotsc, y_N^*)$ with $y_i^* \in W$ for all $i$.
\end{theorem}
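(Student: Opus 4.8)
The plan is to produce an explicit change of variables that turns the log-polar formation controller of Equation~\eqref{eq:controller_yi} into the quantized averaging algorithm of Equation~\eqref{eq:controller-z}, and then to invoke Proposition~\ref{prop:quantized-average}. The first thing I would observe is that the log-polar quantizer together with $Q$, $\sub$, and $\mean$ all act \emph{componentwise} on the radial and angular coordinates. Consequently the coupled controller on \statey{} splits into an independent radial subsystem in the $r_i$'s and an independent angular subsystem in the $\theta_i$'s, and I would analyze the two coordinates separately, establishing a simulation relation for each.

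For the radial coordinate I would set $z_i[t] = I(r_i[t]) = \round{\log_a r_i[t]}$, so that $Q(r_i[t]) = a^{z_i[t]}$, and introduce the target-shifted integer variable $\tilde{z}_i[t] = z_i[t] - z_i^*$, where $z_i^* = \log_a r_i^*$ is an integer because $r_i^* \in R$. Substituting $\sub(r_i,r_j) = r_i/r_j$ and $\mean(r_i,r_j) = r_i^{1-\omega} r_j^{\omega}$ into the radial part of Equation~\eqref{eq:controller_yi} and taking $\log_a$, the target ratio $(r_i^*/r_j^*)^{\omega}$ appears as the additive constant $\omega(z_i^* - z_j^*)$; passing to the shifted variable makes all $z_i^*,z_j^*$ terms cancel exactly. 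Using the identity $\round{n + x} = n + \round{x}$ for integer $n$, this leaves precisely $\tilde{z}_i[t+1] = \tilde{z}_i[t] + \round{\omega(-\tilde{z}_i[t] + \tilde{z}_j[t])}$, i.e. the $\tilde{z}$-system is an instance of Equation~\eqref{eq:controller-z}. This is the radial simulation relation, and I expect it to be a routine computation once the offset cancellation and the integer-plus-fraction rounding are set up.

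For the angular coordinate I would repeat the argument inside the cyclic group, setting $n_i[t] = I(\theta_i[t])$ and $\tilde{n}_i[t] \equiv_M n_i[t] - n_i^*$ with $\theta_i^* = n_i^*\theta_b$, reducing the angular dynamics to Equation~\eqref{eq:controller-z} read modulo $M$. I would then connect the initial condition $\statey[0] \in \mathrm{Y}_0$ to the convergence hypothesis: summing radial indices gives $\sum_i \tilde{z}_i[0] = \log_a\!\big(\prod_i Q(r_i[0]) / \prod_i r_i^*\big)$, so the first defining constraint of $\mathrm{Y}_0$ is exactly $S := \sum_i \tilde{z}_i[0] = 0$, and likewise the second constraint ($\bigoplus_i Q(\theta_i[0]) = \bigoplus_i \theta_i^*$) is exactly $\sum_i \tilde{n}_i[0] \equiv_M 0$. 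With $S = 0$ we have $L = \lfloor S/N\rfloor = 0$ in Proposition~\ref{prop:quantized-average}, and the only equilibrium in $\xi$ with all coordinates in $\{0,1\}$ and zero sum is the all-zero vector; hence $\tilde{z}_i \to 0$, i.e. $r_i \to r_i^*$, and analogously $\theta_i \to \theta_i^*$, giving $\statey \to \targety$.

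I expect the angular coordinate to be the main obstacle. Proposition~\ref{prop:quantized-average} is stated over the integers $\mathbb{Z}$, whereas the angle indices live in the cyclic group $\mathbb{Z}_M$; the wraparound means the pairwise difference fed into $\round{\omega(\cdot)}$ must be taken as the signed shortest-arc representative for the reduction to be faithful (consistent with the geodesic reading of the weighted geometric mean of rotations in the definition of \mean), and one must also rule out spurious cyclic consensus equilibria, such as all indices equal to a common nonzero $c$ with $Nc \equiv_M 0$. I would handle this either by assuming the angular deviations stay within a half-circle, so the cycle behaves like an interval and the integer result applies verbatim, or by arguing directly that $\sum_i \tilde{n}_i[0] \equiv_M 0$ together with the averaging dynamics forces the zero equilibrium. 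The remaining ingredients (componentwise decoupling and the two algebraic simplifications) are then straightforward.
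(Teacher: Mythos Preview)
Your approach matches the paper's essentially verbatim: the same target-shifted integer variables $\tilde z_i = I(r_i)-I(r_i^*)$ (and the analogous angular indices), the same algebraic reduction of Equation~\eqref{eq:controller_yi} to the quantized averaging update of Equation~\eqref{eq:controller-z}, and the same application of Proposition~\ref{prop:quantized-average} with $S=0$, $L=0$ forcing the all-zero equilibrium. One remark: the obstacle you anticipate for the angular coordinate---that Proposition~\ref{prop:quantized-average} is stated over $\mathbb{Z}$ while the angle indices live in $\mathbb{Z}_M$, so wraparound and spurious cyclic equilibria must be excluded---is real, but the paper's own proof does not address it either; it derives the angular update only as a congruence $\equiv_M$ and then invokes the integer consensus result without further comment, so on this point your proposal is actually more careful than the published argument.
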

\begin{proof}
The proof is to show that the formation control system in Equation~\eqref{eq:controller_yi} is simulated by the quantized averaging system in Equation~\eqref{eq:controller-z} for every time step,
and thus we guarantee the convergence using Proposition~\ref{prop:quantized-average}.
Recall in Section~\ref{subsec:quantizer} that the quantizer $Q$ has a corresponding indexing function $I$.
Let $y_i[t] = [r_i[t]\, \theta_i[t]]^T$ and $y_i^* = [r_i^*\, \theta_i^*]^T$,
we denote $z_i[t]_r$ for the radial coordinate and  $z_i[t]_\theta$ for the angular coordinate,
and we relate the two systems by $z_i[t]_r = I(r_i[t]) - I(r_i^*)$ and 
$z_i[t]_\theta = I(\theta_i[t]) - I(\theta_i^*)$ for all $i$.

We first derive for the radial coordinates.
Given the relation $z_i[t] = I(y_i[t]) - I(y_i^*)$ for all $i$,
we apply the indexing function $I$ on the new state $r_i[t+1]$ to calculate $z_i[t+1]_r$.
\begin{small}
\begin{align*}
I(r_i[t+1])
  &= I(Q(\mean(Q(r_i[t]), \sub(Q(r_j[t]), \sub(r_j^*, r_i^*))))) \\
  =& \log_a  a^\wedge\round{\log_a \left( Q(r_i[t])^{(1-\omega)} \cdot (\frac{Q(r_j[t])\cdot r_i^*}{r_j^*})^{\omega}\right) } \\
  =& \lfloor (1-\omega)\cdot(\log_a Q(r_i[t])) \\
   &\ +
    \omega\cdot(\log_a Q(r_j[t]) - \log_a r_j^* + \log_a r_i^*) \rceil \\
   &\because r_i^*, r_j^* \text{ are quantized values.} \\
  =& \lfloor
    (1-\omega)\cdot I(r_i[t])\ +
    \omega\cdot(I(r_j[t]) - I(r_j^*) + I(r_i^*)) \rceil \\
  =&\ I(r_i[t]) +
      \lfloor -\omega\cdot(I(r_i[t]) - I(r_i^*)) + \omega\cdot(I(r_j[t]) - I(r_j^*)) \rceil
\end{align*}
\end{small}%
We now calculate $z_i[t+1]_r$ as follows:
\begin{align*}
z_i[t+1]_r &= I(r_i[t+1]) - I(r_i^*) \\
  &= I(r_i[t]) - I(r_i^*) \\
  &\quad + \lfloor -\omega\cdot(I(r_i[t]) - I(r_i^*)) + \omega\cdot(I(r_j[t]) - I(r_j^*))\rceil \\
  &= z_i[t]_r + \round{\omega\cdot(-z_i[t]_r + z_j[t]_r)}
\end{align*}
This is exactly the same as the quantized averaging algorithm in Equation~\eqref{eq:controller-z}.
Therefore, we have shown that the indices of the quantized radial coordinates evolve according to the quantized averaging algorithm.

Similarly, we derive for the angular coordinate.
Recall the definitions in Equation~\eqref{def:quantized-angle}.
Given two quantized angles $\theta_i = Q(\theta_i)$ and $\theta_j = Q(\theta_j)$,
we can distribute the indexing function over the addition and inverse as follows:
\begin{gather*}
I(\theta_i \oplus \theta_j) \equiv_M \round{\frac{(I(\theta_i) + I(\theta_j))\cdot\theta_b}{\theta_b}}
                       \equiv_M I(\theta_i) + I(\theta_j) \\
I(-\theta_i) \equiv_M I(-Q(\theta_i)) \equiv_M \round{\frac{-I(\theta_i)\cdot \theta_b}{\theta_b}} \equiv_M -I(\theta_i)
\end{gather*}
Additionally, when $\theta_i = Q(\theta_i)$,
we can derive for $\frac{1}{2} < \omega < \frac{3}{4}$:
\[
I(\omega\cdot\theta_i) \equiv_M \round{\frac{\omega\cdot I(\theta_i)\cdot\theta_b}{\theta_b}} \equiv_M \round{\omega\cdot I(\theta_i)}
\]
We now calculate the index of the new state $\theta_i[t+1]$.
\begin{small}
\begin{align*}
  I(\theta_i[t+1])
  =\ & I(Q(\mean(Q(\theta_i[t]), \sub(Q(\theta_j[t]), \sub(\theta_j^*, \theta_i^*)))))\\
  \equiv_M\ & I(Q(\theta_i[t]) \oplus (\omega \cdot (Q(\theta_j[t]) \oplus -\theta_j^* \oplus \theta_i^* \oplus -Q(\theta_i[t])))) \\
  \equiv_M\ & I(\theta_i[t]) + I(\omega \cdot (-Q(\theta_i[t])\oplus\theta_i^* \oplus Q(\theta_j[t]) \oplus -\theta_j^*))) \\
  \equiv_M\ & I(\theta_i[t]) + \round{\omega\cdot(-I(\theta_i[t]) + I(\theta_i^*) + I(\theta_j[t]) - I(\theta_j^*))}
\end{align*}
\end{small}%
Then, we calculate $z_i[t+1]_\theta$:
\begin{align*}
z_i[t+1]_\theta=\ & I(\theta_i[t+1]) - I(\theta_i^*) \\
\equiv_M\ & I(\theta_i[t]) - I(\theta_i^*) \\
          & + \round{\omega\cdot(-I(\theta_i[t]) + I(\theta_i^*) + I(\theta_j[t]) - I(\theta_j^*))} \\
\equiv_M\ & z_i[t]_\theta + \round{\omega\cdot(-z_i[t]_\theta + z_j[t]_\theta)}
\end{align*}
Again, the index of the angular coordinate evolves exactly following the quantized averaging algorithm in Equation~\eqref{eq:controller-z}.
The derivation for $z_j[t+1]$ is the same and skipped.

Furthermore, for any initial state $\statey[0] \in \mathrm{Y}_0$,
we can derive the initial sum of the quantized averaging system as follows:
\begin{align*}
            & \prod_i Q(r_i[0]) = \prod_i r_i^* \ \land\   \bigoplus_i Q(\theta_i[0])=\bigoplus_i\theta_i^* \\
\Rightarrow & \prod_i \frac{Q(r_i[0])}{r_i^*} = 1 \ \land\  \bigoplus\limits_i (Q(\theta_i[0]) \oplus -\theta_i^*) = 0 \\
\Rightarrow & \sum_i \log_a Q(r_i[0]) - \log_a r_i^* = 0 \ \land\  \sum_i \frac{Q(\theta_i[0])}{\theta_b} - \frac{\theta_i^*}{\theta_b} = 0\\
\Rightarrow & \sum_i I(r_i[0]) - I(r_i^*) = 0  \ \land\  \sum_i I(\theta_i[0]) - I(\theta_i^*) = 0 \\
\Rightarrow & \sum_i z_i[0]_r = 0 \ \land\  \sum_i z_i[0]_\theta = 0
\end{align*}
As a result, the initial sum $S$ in the z-system is 0,
the quantized average $L$ is always 0,
and according to Proposition~\ref{prop:quantized-average} the z-system converges to the only equilibrium where all $z_i = L = 0$.
Therefore, $y_i$ converges to $y_i^*$ for all $i \in \{1, \dotsc, N\}$.
Thus, our formation control system converges to \targety.
\end{proof}

We further provide the initial condition such that the system will remain in the safe set.
\begin{theorem}\label{thm:y-safe}
Given a target state in the safe set $\targety \in \Safe_\statey$,
if the system starts in an initial state
$\statey[0] \in (\mathrm{Y}_0 \cap \Safe_0)$ where
\begin{small}
\[
\Safe_0 = 
\left\{ (\begin{bmatrix}
    r_1 \\
    \theta_1
\end{bmatrix},\dotsc,\begin{bmatrix}
    r_N \\
    \theta_N
\end{bmatrix}) \right|\left.
\bigwedge_{i=1}^N \frac{Q(d_{\min})}{\min\limits_j \{r_j^*\}}  < \frac{Q(r_i)}{r_i^*} < \frac{Q(d_{\max})}{\max\limits_j \{r_j^*\}}
\right\},
\]
\end{small}%
then the system will always stay in the safe set, i.e., $d_{\min} < r_i[t] < d_{\max}$ for all time $t$.
\end{theorem}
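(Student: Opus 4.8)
The plan is to reuse the simulation relation established in the proof of Theorem~\ref{thm:y-eq} and reduce the radial safety requirement $d_{\min} < r_i[t] < d_{\max}$ to a boundedness property of the integer $z$-system. Safety concerns only the radii (inter-agent distances), so I would discard the angular coordinate entirely and work exclusively with $z_i[t]_r = I(r_i[t]) - I(r_i^*)$. The crucial ingredient that Proposition~\ref{prop:quantized-average} does \emph{not} directly supply is an \textbf{interval-preservation invariant}: the quantized averaging update in Equation~\eqref{eq:controller-z} keeps every coordinate inside $[\min_k z_k[0]_r,\ \max_k z_k[0]_r]$ for all $t$. Given this invariant, it suffices to show that the initial set $\mathrm{Y}_0 \cap \Safe_0$ forces these extreme values to lie strictly inside the safe band expressed in $z$-coordinates, and then to translate back through $Q(r) = a^{I(r)}$.

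First I would prove the invariant. Consider a single pairwise update on $(i,j)$ and assume without loss of generality $z_i \le z_j$ with integer gap $d = z_j - z_i \ge 0$. The update moves $z_i$ up by $\round{\omega d}$ and $z_j$ down by the same amount, so it is enough to verify $0 \le \round{\omega d} \le d$, which keeps both new values inside $[z_i, z_j]$. For $d \ge 2$ this follows from $\frac{1}{2} < \omega < \frac{3}{4}$, since then $\omega d + \frac{1}{2} \le d$; the only delicate case is $d = 1$, where $\round{\omega} = 1$ for $\omega \in (\frac12,\frac34)$ and the two agents merely exchange values, $z_i{+}1 = z_j$ and $z_j{-}1 = z_i$. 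As all other coordinates are untouched, the global minimum is non-decreasing and the global maximum is non-increasing, which yields the invariant for all $t$.

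Next I would push the initial condition into $z$-coordinates. Because every $r_i^* \in W$ is quantized, $\log_a r_i^* = I(r_i^*)$ and $\log_a(\min_j r_j^*) = \min_j I(r_j^*)$ by monotonicity of $\log_a$, and likewise $\log_a Q(d_{\min}) = I(d_{\min})$ and $\log_a Q(d_{\max}) = I(d_{\max})$. Taking $\log_a$ of the defining inequalities of $\Safe_0$ then gives, for every $i$, $I(d_{\min}) - \min_j I(r_j^*) < z_i[0]_r < I(d_{\max}) - \max_j I(r_j^*)$; in particular $\min_k z_k[0]_r > I(d_{\min}) - \min_j I(r_j^*)$ and $\max_k z_k[0]_r < I(d_{\max}) - \max_j I(r_j^*)$. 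Combining with the invariant, re-substituting $z_i[t]_r = I(r_i[t]) - I(r_i^*)$, and using $\min_j I(r_j^*) \le I(r_i^*) \le \max_j I(r_j^*)$, I obtain the integer bounds $I(d_{\min}) + 1 \le I(r_i[t]) \le I(d_{\max}) - 1$. For $t \ge 1$ the controller output in Equation~\eqref{eq:controller_yi} is quantized, so $r_i[t] = a^{I(r_i[t])}$, and the rounding estimates $\round{x} \le x + \frac12$ and $\round{x} \ge x - \frac12$ (applied at $x = \log_a d_{\max}$ and $x = \log_a d_{\min}$) give $a^{I(d_{\max})-1} < d_{\max}$ and $a^{I(d_{\min})+1} > d_{\min}$, hence $d_{\min} < r_i[t] < d_{\max}$.

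I expect the main obstacle to be the interval-preservation invariant together with the careful bookkeeping of \emph{strict} versus non-strict inequalities through the $\round{\cdot}$ operator: Proposition~\ref{prop:quantized-average} only characterizes the limit set, so the "stays bounded for all $t$" guarantee must be argued directly from the rounding, and the $d = 1$ case is exactly where a naive estimate would fail. A secondary subtlety is the base case $t = 0$, where the initial radii need not be quantized, so $I(r_i[0]) \in [I(d_{\min})+1,\,I(d_{\max})-1]$ bounds only the \emph{rounded} value and does not by itself place $r_i[0]$ strictly between $d_{\min}$ and $d_{\max}$; I would either state the safety guarantee for the quantized trajectory ($t \ge 1$) or additionally assume the initial radii are quantized, which is the natural convention once perception is viewed through the quantizer.
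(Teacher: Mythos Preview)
Your approach is correct and essentially the same as the paper's: translate the radial safety condition to the integer $z$-system via $z_i[t]_r = I(r_i[t]) - I(r_i^*)$, use the interval-preservation invariant of the quantized-averaging update together with $\min_j I(r_j^*) \le I(r_i^*) \le \max_j I(r_j^*)$, and translate the integer bound back through $Q$. The only differences are that the paper cites the invariant from~\cite[Theorem~2]{kashyap2007quantized} rather than proving it directly, and is less explicit than you are about the final passage from $I(d_{\min}) < I(r_i[t])$ to $d_{\min} < r_i[t]$ and about the $t=0$ edge case you flag.
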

\begin{proof}
Here we show the proof steps for the minimum safe distance $d_{\min}$.
We first derive the lower bound for $z_i[0]_r$ from the initial condition $\statey[0] \in (\mathrm{Y}_0 \cap \Safe_0)$.
\begin{align*}
             & \frac{Q(d_{\min})}{\min\limits_j \{r_j^*\}}  < \frac{Q(r_i[0])}{r_i^*} \\
\Rightarrow\ & \round{\log_a d_{\min}} - \min\limits_j \{\log_a r_j^*\} < \round{\log_a r_i[0]} - \log_a r_i^*\\
\Rightarrow\ & I(d_{\min}) - \min\limits_j \{I(r_j^*)\} < I(r_i[0]) - I(r_i^*) = z_i[0]_r
\end{align*}
Then, from~\cite[Theorem 2]{kashyap2007quantized}, we know $z_i[t]_r$ is bounded by the minimum initial value, i.e., $\min_j \{z_j[0]_r\} \leq z_i[t]_r$ for all time $t$.
We use it to derive the lower bound on $r_i[t]$ for all time $t$ as follows.
\begin{align*}
            & I(d_{\min}) - \min\limits_j \{I(r_j^*)\} < \min_j \{z_j[0]_r\} \leq z_i[t]_r \\
\Rightarrow & I(d_{\min}) - \min\limits_j \{I(r_j^*)\} < I(r_i[t]) - I(r_i^*) \\
\Rightarrow & I(d_{\min}) + (I(r_i^*) - \min\limits_j \{I(r_j^*)\}) < I(r_i[t]) \\
\Rightarrow & I(d_{\min}) < I(r_i[t]) \qquad \because I(r_i^*) \geq \min\limits_j \{I(r_j^*)\} \\
\Rightarrow & Q(d_{\min}) < Q(r_i[t]) \Rightarrow d_{\min} < r_i[t]
\end{align*}
We skip the dual proof for the upper bound $r_i[t] < d_{\max}$.
\end{proof}
\begin{remark}\label{remark:quantizer}
Theorem~\ref{thm:y-safe} suggests that
we should carefully design the closest and farthest distances, $\min_j r_j^*$ and $\max_j r_j^*$,
in the target state because they constrain the set of safe initial states.
For example, when the distance in the target state $r_j^*$ is close to the minimum safe distances $d_{\min}$,
then the bound becomes tighter,
and hence fewer initial states can ensure safety.
In addition, $Q$ cannot be too coarse, that is, the value of the step radius $a$ should not be too large.
It should ensure $\frac{Q(d_{\min})}{\min_j \{r_j^*\}} < \frac{Q(d_{\max})}{\max\limits_j \{r_j^*\}}$.
Otherwise, $\mathrm{S}_0$ becomes an empty set, and no initial states can ensure safety.
\end{remark}

\subsection{Bound on Expected Convergence Time}\label{subsec:timebound}

Now, we analyze the upper bound on the convergence time.
We first provide the existing result on the convergence time of quantized consensus algorithms.
Then, we derive the upper bound for the formation control system.

Following~\cite{kashyap2007quantized}, the probability distribution of the convergence time is defined as
$T_{con}(\statez) = \inf \{ t \mid \statez[t] \in \xi, \statez[0]=\statez \}$.
Since our z-system evolves by a quantized gossip algorithm over \emph{linear networks},
the upper bound on the expected convergence time is provided in~\cite[Lemma 7]{kashyap2007quantized} as:
\begin{equation}\label{eq:convergence-bound}
    \max_{\statez \in \mathrm{Z}_0} \mathbb{E}[T_\mathit{con}(\statez)] \leq \frac{(z_{\max}-z_{\min})^2}{8} \cdot \frac{N(N^2-1)(N-1)}{4}
\end{equation}
where $z_{\min}$ and $z_{\max}$ are parameters specifying the minimum and maximum integer values among all possible states,
and $\mathrm{Z}_0=\{(z_1, \dotsc, z_N) \mid z_{\min} \leq z_i \leq z_{\max}\}$ includes all possible initial states.
With the upper bound in Equation~\eqref{eq:convergence-bound},
we derive the bound on the expected convergence time of our formation control system.
\begin{theorem}
\label{thm:conv-time}
The formation control system in Equation~\eqref{eq:controller_yi} converges with the following upper bound on the expected convergence time:
\[
\max_{\statey \in (\mathrm{Y_0} \cap \Safe_0)} \mathbb{E}[T_\mathit{con}(\statey)] \leq 
\frac{\Delta^2}{8} \cdot \frac{N(N^2-1)(N-1)}{4}
\]
where $\Delta = \max(I(d_{\max}) - I(d_{\min}), M - 1)$, $\mathrm{Y_0}$ is the same in Theorem~\ref{thm:y-eq}, and $\Safe_0$ is the same in Theorem~\ref{thm:y-safe}.
\end{theorem}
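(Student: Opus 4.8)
The plan is to reduce the claim to the convergence-time bound for the $\statez$-system in Equation~\eqref{eq:convergence-bound} by reusing the step-by-step simulation relation from the proof of Theorem~\ref{thm:y-eq}. There, the radial indices $I(r_i[t]) - I(r_i^*)$ and the angular indices $I(\theta_i[t]) - I(\theta_i^*)$ were each shown to evolve as an independent instance of the quantized averaging dynamics in Equation~\eqref{eq:controller-z}, both driven by the same randomly chosen pair $(i,j)$ at every step. Hence the $\statey$-system reaches $\targety$ exactly when both the radial and the angular subsystems have reached their equilibria, so $T_{con}(\statey) = \max(T_r, T_\theta)$, where $T_r$ and $T_\theta$ denote the convergence times of the radial and angular subsystems. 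It then remains to (i) bound the integer spread $z_{\max} - z_{\min}$ of each subsystem by $\Delta$ and (ii) convert the two per-coordinate bounds into a single bound on the expectation of the maximum.

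For step (i), I would bound the radial spread exactly as in the proof of Theorem~\ref{thm:y-safe}: starting from $\statey[0] \in (\mathrm{Y}_0 \cap \mathrm{S}_0)$ and taking $\log_a$ of the two inequalities defining $\mathrm{S}_0$, and using that each $r_j^*$ is quantized so that $\log_a r_j^* = I(r_j^*)$, one obtains $I(d_{\min}) - \min_j I(r_j^*) < z_i[0]_r < I(d_{\max}) - \max_j I(r_j^*)$ for every $i$. Subtracting these endpoints gives an initial radial spread of at most $(I(d_{\max}) - I(d_{\min})) - (\max_j I(r_j^*) - \min_j I(r_j^*)) \le I(d_{\max}) - I(d_{\min})$, and by \cite[Theorem 2]{kashyap2007quantized} the $z$-values never leave their initial min/max interval, so this bound holds for all $t$. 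The angular indices live in $\{0,\dots,M-1\}$ modulo $M$, so their spread is at most $M-1$. Both spreads are therefore at most $\Delta = \max(I(d_{\max}) - I(d_{\min}),\, M-1)$, which lets me instantiate Equation~\eqref{eq:convergence-bound} with $z_{\max} - z_{\min} = \Delta$ for each subsystem and conclude $\mathbb{E}[T_r],\, \mathbb{E}[T_\theta] \le \frac{\Delta^2}{8}\cdot\frac{N(N^2-1)(N-1)}{4}$.

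The main obstacle is step (ii), bounding $\mathbb{E}[\max(T_r, T_\theta)]$ rather than the individual expectations, since in general $\mathbb{E}[\max(A,B)]$ can exceed $\max(\mathbb{E}[A], \mathbb{E}[B])$ and a crude union argument would cost an extra factor of two. My preferred route is a sample-path coupling: because both subsystems run on the identical gossip sequence and each has initial spread at most $\Delta$, I would dominate the joint convergence time by the convergence time of a single spread-$\Delta$ averaging chain, arguing that the quadratic-potential decrease underlying Equation~\eqref{eq:convergence-bound} governs whichever coordinate converges last. This collapses the two coordinates into one application of Equation~\eqref{eq:convergence-bound} with $z_{\max} - z_{\min} = \Delta$, yielding the stated bound; the definition $\Delta = \max(\cdot,\cdot)$ is precisely what makes this common spread bound available. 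If the coupling proves delicate, a fallback is to re-run the potential argument of \cite{kashyap2007quantized} directly on the joint chain and verify that the slower coordinate dictates the same $\frac{\Delta^2}{8}\cdot\frac{N(N^2-1)(N-1)}{4}$ bound.
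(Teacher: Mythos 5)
Your step (i) reproduces, almost verbatim, the paper's \emph{entire} proof: the paper derives $z_{\min} > I(d_{\min})-\min_j\{I(r_j^*)\}$ and dually $z_{\max} < I(d_{\max})-\max_j\{I(r_j^*)\}$ from $\Safe_0$, uses $\max_j\{I(r_j^*)\}\geq\min_j\{I(r_j^*)\}$ to conclude $z_{\max}-z_{\min} < I(d_{\max})-I(d_{\min})$, bounds the angular spread by $M-1$ via the modulo, and then instantiates Equation~\eqref{eq:convergence-bound} with the larger of the two spread bounds as $\Delta$. That is where the paper stops.

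Your step (ii) therefore has no counterpart in the paper's proof, and your diagnosis there is correct: since reaching $\targety$ requires \emph{both} coordinate chains to be at their equilibria (which, because $S=0$, are the all-zero states), the quantity being bounded is $\mathbb{E}[\max(T_r,T_\theta)]$, and the paper implicitly replaces this with the larger of the two per-coordinate bounds --- a step that is not justified, since in general $\mathbb{E}[\max(A,B)]$ can exceed $\max(\mathbb{E}[A],\mathbb{E}[B])$. So you have exposed a looseness in the paper rather than resolved one. However, your proposed repair is itself not a proof. The coupling claim fails as stated: a gossip step on a pair $(i,j)$ can make progress in the radial chain while doing nothing in the angular chain (the pair may have distinct radial indices but equal angular indices), so sharing the random pair sequence does not produce a single spread-$\Delta$ chain whose convergence time pathwise dominates $\max(T_r,T_\theta)$; such a dominating chain would have to be constructed and the domination proved. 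What follows immediately is only the factor-two bound $\mathbb{E}[\max(T_r,T_\theta)]\leq\mathbb{E}[T_r]+\mathbb{E}[T_\theta]$, and your fallback of a joint potential $V_r+V_\theta$ (nonincreasing, and zero exactly at joint convergence) starts from an initial value that is the \emph{sum} of the two coordinate potentials, so run naively it again yields roughly a factor of two. In short, your write-up matches the paper wherever the paper is rigorous, correctly isolates the step the paper silently skips, but leaves that step open just as the paper does; as written, both arguments justify the stated constant only up to a factor of two.
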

\begin{proof}
The proof is to apply Equation~\ref{eq:convergence-bound} and select $\Delta$ to be the larger upper bound among the bounds on $z_{\max} - z_{\min}$ for the radial and angular coordinates in the integer domain.
For the radial coordinate,
we derive a lower bound for $z_{\min} = \min_i \{I(r_i) - I(r_i^*)\}$ from $\statey \in \Safe_0$ as follows:
\begin{align*}
            & \bigwedge_{i=1}^N \frac{Q(d_{\min})}{\min\limits_j \{r_j^*\}}  < \frac{Q(r_i)}{r_i^*} \ 
\Rightarrow\  \frac{Q(d_{\min})}{\min\limits_j \{r_j^*\}} < \min_i \{\frac{Q(r_i)}{r_i^*}\} \\
\Rightarrow\ & I(d_{\min}) - \min\limits_j \{I(r_j^*)\} < \min_i \{I(r_i) - I(r_i^*)\} = z_{\min}
\end{align*}
Dually, we find the bound $z_{\max} < I(d_{\max}) - \max_j \{I(r_j^*)\}$.
Using the fact that $\max_j \{I(r_j^*)\} \geq \min_j \{I(r_j^*)\}$,
this leads to $z_{\max} -  z_{\min} < I(d_{\max}) - I(d_{\min})$.

For the angular coordinate,
the value of $I(\theta_i) - I(\theta_i^*)$ is bounded by 0 and $M-1$ because of the modulo operator,
hence the upper bound on $z_{\max} - z_{\min}$ is $M - 1$.
\end{proof}

\section{Quantization as Perception Contracts}\label{sec:perc-error}
For vision-based perception, uniform worst case bounds on the perception error between the ground truth and the perceived value can be overly conservative for system-level analysis. 
Recent research has shown that state-dependent error models can strike a balance between the conservatism of the safety analysis and the precision of characterizing deep learning-based perception systems~\cite{hsieh2022aap}. 

Following the same insight, we investigate the relationship between the ground truth and the perceived relative positions,
and we study how to search for the parameter values for the quantizer according to the empirically observed perception errors.
We randomly sampled pairs of camera images from two drones under different relative positions in \AirSim.
We fixed drone $i-1$ as the origin and uniformly sampled 10,000 positions of drone $i$ within a radius between 2 m to 20 m in the \AirSimNH environment from \AirSim.
For each sample, we obtained a pair of true relative position $y_i$ from \AirSim and perceived relative position $\hat{y}_{i}$ via vision-based pose estimation pipeline (of Section~\ref{subsec:prelim-vision}).

\begin{figure}[t]
    \centering
    \begin{subfigure}{0.40\columnwidth}
        \includegraphics[width=\textwidth,trim=0 5mm 0 3mm,clip]{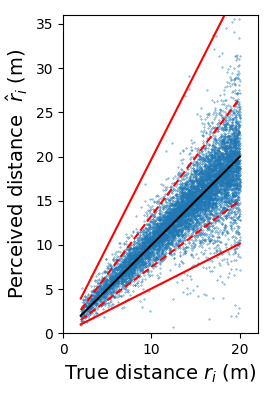}
        \caption{Radii on a linear scale.\label{subfig:radius-lin}}
    \end{subfigure}
    \begin{subfigure}{0.40\columnwidth}
        \includegraphics[width=\textwidth,trim=0 5mm 0 3mm,clip]{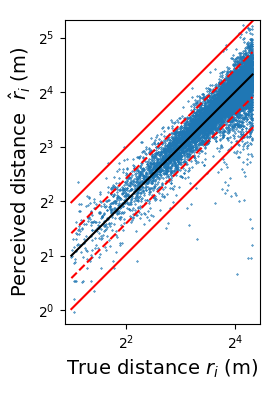}
        \caption{Radii on a log scale.\label{subfig:radius-log}}
    \end{subfigure}

    \begin{subfigure}[t]{0.40\columnwidth}
        \includegraphics[width=\textwidth,trim=0 15mm 0 17mm,clip]{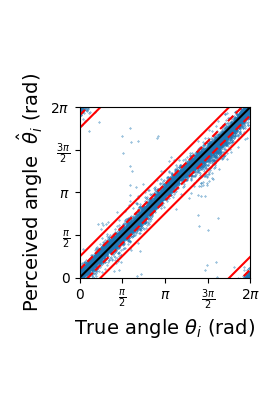}
        \caption{Angles in $[0,2\pi)$.\label{subfig:angle-lin}}
    \end{subfigure}
    \begin{subfigure}[t]{0.40\columnwidth}
        \includegraphics[width=\textwidth,trim=0 15mm 0 17mm,clip]{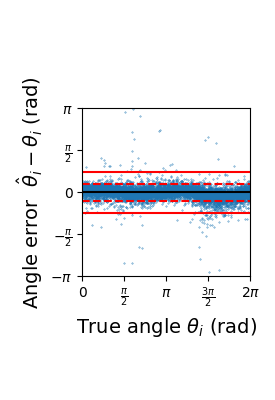}
        \caption{Angles in $[0,2\pi)$ and angle errors in $[-\pi,\pi)$.\label{subfig:angle-wrap}}
    \end{subfigure}

    \caption{Perceived relative positions $\hat{y}_{i}=[\hat{r}_i\,\hat{\theta}_i]^T$ with respect to true relative position $y_{i}=[r_i\,\theta_i]^T$.
    Sampled data are~\textcolor{pyplotblue}{blue dots}.
    The perceived value with no error is the \textbf{black line}.
    The empirical bounds are plotted as \textcolor{pyplotred}{red lines} such that
    99\% (90\%) of dots fall within the solid lines (the dashed lines).
    \label{fig:perc-err-bound-fun}}
\end{figure}

Figure~\ref{fig:perc-err-bound-fun} plots the perceived position $\hat{y}_{i}=[\hat{r}_i\,\hat{\theta}_i]^T$ with respect to
the true position $y_i=[r_i\,\theta_i]^T$.
In Figure~\ref{subfig:radius-lin}, we observe that the perceived distances $\hat{r}_i$ scatter wider when the true distance $r_i$ increases.
Secondly, much more perceived distances deviate greatly from the true distance when the true distance crosses a certain threshold,
e.g., about 15 meters in Figure~\ref{subfig:radius-lin}.
This is not too surprising:
As the two drones become farther apart, the overlap of the two camera views is smaller
and causes fewer matched features than eight pairs, which leads to inaccuracy in relative pose estimation.

Our ultimate goal is to design a quantizer whose quantization error overapproximates the perception error of the vision component,
and we empirically approximate the perception error from collected samples.
Recall in Section~\ref{ssec:quant}, the PC enforced by the quantizer $Q$ is $Q(\hat{y}_{i}) = Q(y_i)$.
We further simplify the PC as $Q(\hat{y}_{i}) = y_i$ because $y_i$ is a quantized value.
We start by expanding the definitions in Equation~\ref{def:quantized-radius} for the radial coordinates.
The PC can be rewritten as the following:
\begin{align*}
 & Q(\hat{r}_i) = r_i \Leftrightarrow a^{\round{\log_a \hat{r}_i}} = r_i \Leftrightarrow \round{\log_a \hat{r}_i} = \log_a r_i \\
   \Leftrightarrow\ & \log_a r_i-0.5 \leq \log_a \hat{r}_i < \log_a r_i+0.5 \\
   \Leftrightarrow\ & \log r_i -0.5\cdot\log a \leq \log \hat{r}_i < \log r_i + 0.5\cdot\log a
\end{align*}
This says that the PC defines a pair of \emph{linear bounds} around the ground truth on a log scale.
We can decrease or increase the value of $a$ to make the PC more strict or relaxed.
Ideally, the PC should hold for all observed data,
but this can lead to an overly relaxed PC that the quantizer $Q$ cannot ensure safety (See Remark~\ref{remark:quantizer}).
In practice, we may preprocess the data to remove outliers.
For example, the pair of red solid lines in Figure~\ref{subfig:radius-log} depicts
the bounds inferred from ignoring the worst 1\% of perceived values and therefore covering 99\% of data,
and the pair of red dashed lines depicts the bounds covering 90\% of data.
We also plot the bounds transformed back to the linear scale in Figure~\ref{subfig:radius-lin}.
Similarly, selecting larger or smaller $\theta_b$ defines more strict or relaxed \emph{constant bounds} for the angular coordinates.
It is worth noting that, due to the normalization,
angles are wrapped around 0 and $2\pi$ as shown in Figure~\ref{subfig:angle-lin}.
Therefore, we follow the standard approach to normalize the angle error $\hat{\theta_i} - \theta_i$ to $[-\pi, \pi)$ and infer the bounds as shown in Figure~\ref{subfig:angle-wrap}.

\begin{figure}[t!]
    \includegraphics[width=\columnwidth]{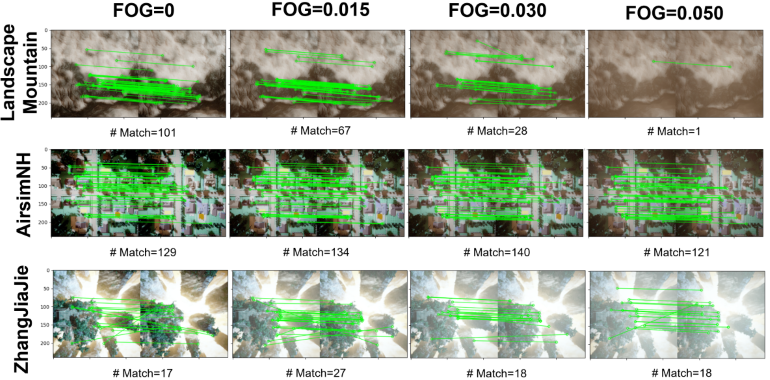}
    \centering
    \caption{Matched feature points found by the feature detection and matching algorithm under three AirSim environments and four fog levels.
}
    \label{fig:matched-feature-pairs}
\end{figure}

\begin{figure}[t!]
    \centering
    \includegraphics[width=\columnwidth, trim=0 0 0 12mm, clip]{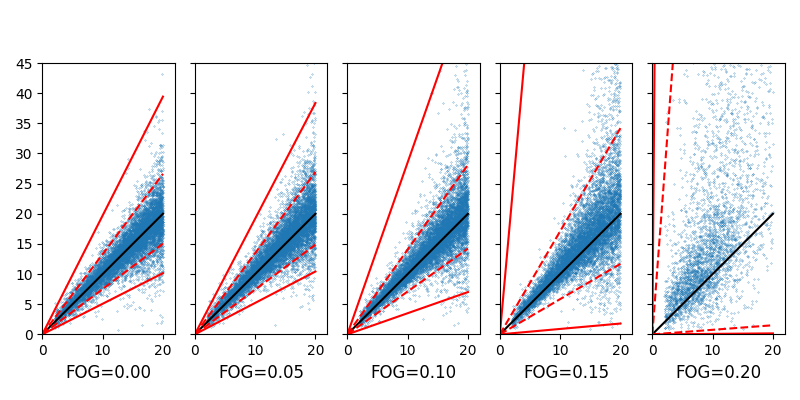}
    \caption{Perceived distances $\hat{r}_{i}$ and empirical linear bounds with respect to true distances $r_i$ under five fog levels.}
    \label{fig:perc-err-bound-varying-envs}
\end{figure}

The \PercCont also depends on environmental factors.
To systematically study the impact of environmental variations on the \PercCont, we experimented with different environments and weather conditions in the photorealistic \AirSim simulator.
Figure~\ref{fig:matched-feature-pairs} shows how the feature matching step degrades across three environments (namely \LandscapeMountains, \AirSimNH, and \ZhangJiajie) and four fog levels. 
Note that at the fog level 0.050, only one pair of matching features is detected for the same relative position under \LandscapeMountains.

\begin{figure}[t!]
    \includegraphics[height=3cm, trim=0 2mm 0 2mm, clip]{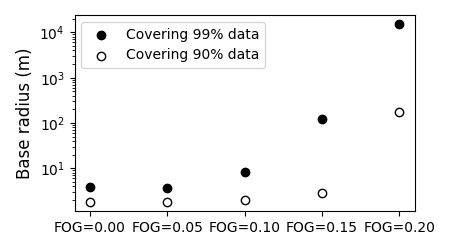}
    \centering
    \caption{Inferred step radius $a$ on a log scale covering 99\% and 90\% of data with respect to five fog levels. 
    }\label{fig:fog-percentile}
\end{figure}

Figure~\ref{fig:perc-err-bound-varying-envs} shows the \PercCont{}s for five fog levels under \AirSimNH.
The perception bound increases much faster (against the relative distance) in a foggier weather.
To better visualize this trend, we further plot the inferred step radius $a$ for covering 99\% and 90\% of data points under varying fog levels in Figure~\ref{fig:fog-percentile}.
Note that the y-axis is on a log scale in Figure~\ref{fig:fog-percentile},
so both values in fact increase faster than exponential growth with respect to the fog levels,
and the step radius for covering 99\% data increases more significantly.
Unsurprisingly, when it is too foggy, the value of $a$ is too large to satisfy the condition for Theorem~\ref{thm:y-safe} to ensure safety.

\section{Limitations and Discussions}
\label{sec:conclusion}
\begin{chiaoenv}
We presented an analysis for the convergence and safety of a vision-based formation control system.
To tackle the vagaries of the perception component,
our approach uses a \PercCont represented as a quantizer.
This quantizer captures the worst perception error in relative position estimates from the vision component,
which is then used to prove that the drones are safely separated and converges to the desired formation.
Especially, we designed non-uniform quantizers to model the \emph{state-dependent perception error}.
We empirically showed that a quantizer in log-polar coordinates models the observed perception error more accurately using the high-fidelity simulator, AirSim.
We also systematically studied the impact of environmental variations on the \PercCont.
We inferred quantization step sizes according to data sampled under each environment
so that the instantiated quantizer better models the observed error under the environment.

Our study assumed that all drones run synchronously and exchange image feature descriptors instantly.
This is obviously an idealization.
Our analysis will work without this assumption by bounding the change in relative positions under a fixed communication delay.
We can model the change in relative positions as part of the perception error.

Finally, this paper suggests a broad research direction on connecting quantized control and discrete abstractions over the continuous state space~\cite{alur2000discrete}.
Both quantization and discrete abstractions are partitioning the state space,
but they are different in that operators for quantized values such as difference, averaging, maximum, and minimum are not necessarily available for discrete abstractions.
Relating discrete abstractions with quantization will allow us to reuse the theories in quantized control for formal safety analyses.
\end{chiaoenv}







\bibliographystyle{IEEEtran}
\bibliography{references,sayan1}

\begin{thebibliography}{10}
\providecommand{\url}[1]{#1}
\csname url@rmstyle\endcsname
\providecommand{\newblock}{\relax}
\providecommand{\bibinfo}[2]{#2}
\providecommand\BIBentrySTDinterwordspacing{\spaceskip=0pt\relax}
\providecommand\BIBentryALTinterwordstretchfactor{4}
\providecommand\BIBentryALTinterwordspacing{\spaceskip=\fontdimen2\font plus
\BIBentryALTinterwordstretchfactor\fontdimen3\font minus
  \fontdimen4\font\relax}
\providecommand\BIBforeignlanguage[2]{{%
\expandafter\ifx\csname l@#1\endcsname\relax
\typeout{** WARNING: IEEEtran.bst: No hyphenation pattern has been}%
\typeout{** loaded for the language `#1'. Using the pattern for}%
\typeout{** the default language instead.}%
\else
\language=\csname l@#1\endcsname
\fi
#2}}

\bibitem{Blondel}
V.~Blondel, J.~Hendrickx, A.~Olshevsky, and J.~Tsitsiklis, ``Convergence in
  multiagent coordination consensus and flocking,'' in \emph{Proc. Joint 44th
  IEEE Conf. Decision and Control and Eur. Control Conf.}, 2005, pp.
  2996--3000.

\bibitem{Saber.Murray2003Flockingwithobstacle}
R.~Saber and R.~Murray, ``Flocking with obstacle avoidance: cooperation with
  limited communication in mobile networks,'' in \emph{Proc. 42nd IEEE Int.
  Conf. Decision and Control}, vol.~2, 2003, pp. 2022--2028.

\bibitem{mesbahi2010graph}
M.~Mesbahi and M.~Egerstedt, \emph{{Graph Theoretic Methods in Multiagent
  Networks}}.\hskip 1em plus 0.5em minus 0.4em\relax Princeton, NJ, USA:
  Princeton University Press, 2010.

\bibitem{bullo2009distributed}
F.~Bullo, J.~Cortés, and S.~Martínez, \emph{{Distributed Control of Robotic
  Networks: A Mathematical Approach to Motion Coordination Algorithms}}.\hskip
  1em plus 0.5em minus 0.4em\relax Princeton, NJ, USA: Princeton University
  Press, 2009.

\bibitem{montijano2016vision}
E.~Montijano, E.~Cristofalo, D.~Zhou, M.~Schwager, and C.~Sagüés,
  ``{Vision-Based Distributed Formation Control Without an External Positioning
  System},'' \emph{IEEE Trans. Robot.}, vol.~32, no.~2, pp. 339--351, 2016.

\bibitem{fathian2018distributed}
K.~Fathian, ``Distributed {{Formation Control}} of {{Autonomous Vehicles}} via
  {{Vision-Based Motion Estimation}},'' Ph.D. dissertation, Dept. Elect. Eng.,
  Univ. Texas at Dallas, {Richardson, TX, USA}, 2018.

\bibitem{fallah2022visual}
M.~M.~H. Fallah, F.~Janabi-Sharifi, S.~Sajjadi, and M.~Mehrandezh, ``{A Visual
  Predictive Control Framework for Robust and Constrained Multi-Agent Formation
  Control},'' \emph{J. Intell. Robot. Syst.}, vol. 105, no.~4, 2022.

\bibitem{emsoft2022_industry}
M.~Abraham, A.~Mayne, T.~Perez, I.~R. De~Oliveira, H.~Yu, C.~Hsieh, Y.~Li,
  D.~Sun, and S.~Mitra, ``Industry-track: Challenges in rebooting autonomy with
  deep learned perception,'' in \emph{Proc. 2022 Int. Conf. Embedded Softw.},
  2022, pp. 17--20.

\bibitem{kashyap2007quantized}
A.~Kashyap, T.~Başar, and R.~Srikant, ``Quantized consensus,''
  \emph{Automatica}, vol.~43, no.~7, pp. 1192--1203, 2007.

\bibitem{hsieh2022aap}
C.~Hsieh, Y.~Li, D.~Sun, K.~Joshi, S.~Misailovic, and S.~Mitra, ``{Verifying
  Controllers With Vision-Based Perception Using Safe Approximate
  Abstractions},'' \emph{IEEE Trans. Comput.-Aided Design Integr. Circuits
  Syst.}, vol.~41, no.~11, pp. 4205--4216, 2022.

\bibitem{astorga2023perception}
A.~Astorga, C.~Hsieh, P.~Madhusudan, and S.~Mitra, ``{Perception Contracts for
  Safety of ML-Enabled Systems},'' \emph{Proc. ACM on Programming Languages},
  vol.~7, no. OOPSLA2, pp. 299:1--299:27, 2023.

\bibitem{shah2018airsim}
S.~Shah, D.~Dey, C.~Lovett, and A.~Kapoor, ``{AirSim: High-Fidelity Visual and
  Physical Simulation for Autonomous Vehicles},'' in \emph{Proc. 11th Int.
  Conf. Field and Service Robot.}, 2018, pp. 621--635.

\bibitem{Pasareanu-CAV23}
C.~S. P{\u{a}}s{\u{a}}reanu, R.~Mangal, D.~Gopinath, S.~Getir~Yaman, C.~Imrie,
  R.~Calinescu, and H.~Yu, ``Closed-loop analysis of vision-based autonomous
  systems: A case study,'' in \emph{Proc. 35th Int. Conf. Comput. Aided
  Verification}, 2023, pp. 289--303.

\bibitem{Pasareanu22-controllerRV}
R.~Calinescu, C.~Imrie, R.~Mangal, G.~N. Rodrigues, C.~Păsăreanu, M.~A.
  Santana, and G.~Vázquez, ``Discrete-event controller synthesis for
  autonomous systems with deep-learning perception components,''
  arXiv:2202.03360, 2023.

\bibitem{Pasareanu2023assumption}
C.~Păsăreanu, R.~Mangal, D.~Gopinath, and H.~Yu, ``{Assumption Generation for
  the Verification of Learning-Enabled Autonomous Systems},'' arXiv:2305.18372,
  2023.

\bibitem{verifAI}
T.~Dreossi, D.~J. Fremont, S.~Ghosh, E.~Kim, H.~Ravanbakhsh,
  M.~Vazquez-Chanlatte, and S.~A. Seshia, ``{VerifAI: A Toolkit for the Formal
  Design and Analysis of Artificial Intelligence-Based Systems},'' in
  \emph{Proc. 31st Int. Conf. Comput. Aided Verification}, 2019, pp. 432--442.

\bibitem{katz2021verification}
S.~M. Katz, A.~L. Corso, C.~A. Strong, and M.~J. Kochenderfer, ``Verification
  of image-based neural network controllers using generative models,'' \emph{J.
  Aerosp. Inf. Syst.}, vol.~19, no.~9, pp. 574--584, 2022.

\bibitem{cruz2022nnlander}
U.~Santa~Cruz and Y.~Shoukry, ``{NNLander-VeriF: A Neural Network Formal
  Verification Framework for Vision-Based Autonomous Aircraft Landing},'' in
  \emph{Proc. 14th Int. Symp. NASA Formal Methods}, 2022, pp. 213--230.

\bibitem{lowe04sift}
D.~G. Lowe, ``{Distinctive Image Features from Scale-Invariant Keypoints},''
  \emph{Int. J. Comput. Vision}, vol.~60, no.~2, pp. 91--110, 2004.

\bibitem{bay2006surf}
H.~Bay, T.~Tuytelaars, and L.~Van~Gool, ``{SURF: Speeded Up Robust Features},''
  in \emph{Proc. 9th Eur. Conf. Comput. Vision}, 2006, pp. 404--417.

\bibitem{2011-rublee}
E.~Rublee, V.~Rabaud, K.~Konolige, and G.~Bradski, ``{ORB: An efficient
  alternative to SIFT or SURF},'' in \emph{Proc. 2011 Int. Conf. Comput.
  Vision}, 2011, pp. 2564--2571.

\bibitem{muja09flann}
M.~Muja and D.~G. Lowe, ``{Fast Approximate Nearest Neighbors with Automatic
  Algorithm Configuration},'' in \emph{Proc. 4th Int. Conf. Comput. Vision
  Theory Appl.}, 2009, pp. 331--340.

\bibitem{ma2003invitation}
Y.~Ma, S.~Soatto, J.~Kosecka, and S.~S. Sastry, \emph{An {{Invitation}} to
  3-{{D Vision}}: {{From Images}} to {{Geometric Models}}}, 1st~ed., ser.
  Interdisciplinary {{Applied Mathematics}}.\hskip 1em plus 0.5em minus
  0.4em\relax {SpringerVerlag}, 2003.

\bibitem{Malis2007DeeperUO}
\BIBentryALTinterwordspacing
E.~Malis and M.~Vargas, ``{Deeper understanding of the homography decomposition
  for vision-based control},'' {INRIA}, Research Report RR-6303, 2007.
  [Online]. Available: \url{https://hal.inria.fr/inria-00174036}
\BIBentrySTDinterwordspacing

\bibitem{nister2004relativepose}
D.~Nister, ``An efficient solution to the five-point relative pose problem,''
  \emph{IEEE Trans. Pattern Anal. Mach. Intell.}, vol.~26, no.~6, pp. 756--770,
  2004.

\bibitem{li2006fivepoint}
H.~Li and R.~Hartley, ``{Five-Point Motion Estimation Made Easy},'' in
  \emph{18th Int. Conf. Pattern Recognit.}, vol.~1, 2006, pp. 630--633.

\bibitem{liberzon2003hybrid}
D.~Liberzon, ``Hybrid feedback stabilization of systems with quantized
  signals,'' \emph{Automatica}, vol.~39, no.~9, pp. 1543--1554, 2003.

\bibitem{moakher2002means}
M.~Moakher, ``Means and {{Averaging}} in the {{Group}} of {{Rotations}},''
  \emph{SIAM J. Matrix Anal. Appl.}, vol.~24, no.~1, pp. 1--16, 2002.

\bibitem{alur2000discrete}
R.~Alur, T.~Henzinger, G.~Lafferriere, and G.~Pappas, ``{Discrete Abstractions
  of Hybrid Systems},'' \emph{Proc. IEEE}, vol.~88, no.~7, pp. 971--984, 2000.

\end{thebibliography}

\end{document}